\newacronym{MPC}{MPC}{Model Predictive Control}
\newacronym{RMPC}{RMPC}{Robust Model Predictive Control}
\newacronym{DoS}{DoS}{Denial of Service}
\newacronym{ACC}{ACC}{Adaptive Cruise Control}
\newacronym{ROA}{ROA}{Region of Attraction}
\newacronym{IPOPT}{IPOPT}{interior point method}
\newacronym{SOL}{SOL}{Region of Attraction}
\newtheorem{lemma}{Lemma}
\newtheorem{remark}{Remark}
\newtheorem{assumption}{Assumption}
\begin{document}

\title{An Efficient Resilient MPC Scheme via Constraint Tightening against Cyberattacks: Application to Vehicle Cruise Control}

\author{\authorname{Milad Farsi\sup{1}, Shuhao Bian\sup{1}, Nasser L. Azad \sup{1}, Xiaobing Shi \sup{2} and Andrew Walenstein \sup{2}}
\affiliation{\sup{1}Department of Systems Design Engineering, University of Waterloo, Waterloo, Canada}
\affiliation{\sup{2}BlackBerry Limited, Waterloo, Canada}
\email{\{mfarsi, s6bian, nlashgarianazad\}@uwaterloo.ca, \{xiashi, awalenstein\}@blackberry.com}
}


\keywords{Resilient Control, Robust Control, Model Predictive Control, Denial of Service Attack }

\abstract{We propose a novel framework for designing a resilient \gls{MPC} targeting uncertain linear systems under cyber attack. Assuming a periodic attack scenario, we model the system under \gls{DoS} attack, also with measurement noise, as an uncertain linear system with parametric and additive uncertainty. To detect anomalies, we employ a Kalman filter-based approach. Then, through our observations of the intensity of the launched attack, we determine a range of possible values for the system matrices, as well as establish bounds of the additive uncertainty for the equivalent uncertain system. Leveraging a recent constraint tightening robust \gls{MPC} method, we present an optimization-based resilient algorithm. Accordingly, we compute the uncertainty bounds and corresponding constraints offline for various attack magnitudes. Then, this data can be used efficiently in the \gls{MPC} computations online. We demonstrate the effectiveness of the developed framework on the \gls{ACC} problem.}

\onecolumn 
\maketitle
\normalsize 

\setcounter{footnote}{0} \vfill

\glsresetall
\allowbreak
\allowdisplaybreaks
\section{\uppercase{Introduction}}
\label{sec:introduction}

Resilient control refers to the capability of a control system to maintain stable and optimal performance despite cyber-attacks \cite{sandberg2022secure}, disturbances, uncertainties, and faults. Traditional control systems assume ideal conditions, leading to performance degradation or failure during unexpected events. Resilient controllers enhance robustness and adaptability, especially against cyber-attacks in critical systems. In the context of modern vehicles vulnerable to cyber threats, successful attacks can cause loss of control, safety compromises, and harm to passengers \cite{ju2022survey}. Resilient control techniques detect, mitigate, and recover from cyber-attacks, preserving vehicle functionality and safety in adverse conditions.

\gls{DoS} as one of the well-known cyber attacks have become increasingly prevalent in today's digital landscape that can gravely affect modern vehicle systems \cite{biron2018real}. These attacks aim to disrupt or disable the targeted system's services or resources, making them unavailable to legitimate users. Therefore, different techniques are employed in the literature to mitigate potential damages caused by such attacks. Game theory provides a framework for modeling strategic interactions and decision-making processes during cyber attacks \cite{gupta2016dynamic,huang2020dynamic}. Moreover, event-triggered control methods have been popular considering their advantages in cyber-physical systems, including vehicle control \cite{xiao2020resilient,wu2022recent}.    

\gls{RMPC} as a subcategory of \gls{MPC} techniques is a powerful control framework that excels at handling uncertainty and disturbance in real-world applications \cite{bemporad2007robust}. The inherent ability of \gls{RMPC} to explicitly account for uncertainties makes it particularly well-suited for complex systems operating in dynamic environments. \gls{RMPC} encompasses various approaches to handle uncertainties and disturbances in control systems. Min-Max \gls{RMPC} approaches formulate the control problem as a min-max optimization, where the objective is to minimize online the worst-case performance subject to constraints \cite{raimondo2009min}. However, these techniques can involve overly expensive computations. 
\begin{figure*}[!ht]
  \centering
   {\epsfig{file = fig_acc_example.pdf, width = 11cm}}
  \caption{A view of the \gls{ACC} problem}
  \label{fig:acc_example}
 \end{figure*}
Tube-based \gls{RMPC} constructs an invariant set, known as the robust tube, that captures the possible system trajectories considering the uncertainties \cite{langson2004robust,sakhdari2018adaptive}. By formulating the optimization problem within this tube, the system stability and constraint satisfaction are obtained. In \cite{mayne2005robust}, the authors address \gls{RMPC} problem in the presence of bounded disturbances for constrained linear discrete-time systems. 

In \cite{aubouin2022resilient}, to address the resilience issue in maintaining system operation under repeated \gls{DoS} attacks, the concept of $\mu$-step Robust Positively Invariant ($\mu$-RPI) sets is introduced. These sets aim to restrict the impact of attacks, ensuring that any deviation from nominal operation remains limited in time and/or magnitude. Although such approaches offer different perspectives on robust control design and enable the handling of uncertainties and disturbances efficiently, they may result in rather conservative and computationally expensive solutions. 

Constraint-tightening techniques involve iteratively refining the constraints in an optimization problem to enforce robustness \cite{kohler2018novel}. In \cite{bujarbaruah2021simple}, the authors demonstrated that by selecting appropriate terminal constraints and employing an adaptive horizon strategy, constraint tightening may not necessarily result in excessively conservative behavior where its \gls{ROA} can be as large as $98\%$ of the tube-based techniques, such as \cite{langson2004robust}. More importantly, it can run 15x faster. 

Building upon this \gls{RMPC} approach, in this paper, we develop an efficient resilient control framework against a class of cyber-attacks that can be potentially employed in real-time as an alternative to the current \gls{MPC} implementations. Our approach involves the computation of a set of uncertain models that encompasses different levels of the strength of \gls{DoS} attacks, as well as accounting for potential noise and unmodeled dynamics. Through an iterative scheme, we employ the Kalman filter to detect the occurrence of an attack. Subsequently, in the control loop, we estimate the intensity of the attack and adaptively evaluate the control based on the models pre-computed for different circumstances specifically.
Hence, our contributions include: obtaining an overapproximate model with additive and parametric uncertainties based on a practical problem formulation, developing a resilient control framework that can be employed as an extension of \cite{bujarbaruah2021simple} against \gls{DoS}, and validating it on the Adaptive Cruise Control (ACC) problem, illustrated in Fig.\ref{fig:acc_example}.

The rest of the paper is organized as follows. In Section \ref{sec:Problem Formulation}, we formulate the problem. Section \ref{sec:Resilient Framework} presents the resilient control framework and outlines the algorithms designed based on the obtained results. In Section \ref{sec:Anomaly Detection}, we summarize a commonly used anomaly detection method. In Section \ref{sec:Simulation Results}, we present a case study to validate and compare the proposed approach in the simulation environment.  

\paragraph{Notations.}
We denote $n$-dimensional Euclidean space by $\mathbb{R}^n$, and the space of positive reals with the subscript as $\mathbb{R}^n_+$. We further denote by $|X|$ the absolute value of a variable $X$, where for non-scalars, it represents the component-wise absolute value. $||x||$ denotes the 2-norm of a vector $x \in \mathbb{R}^n$. A diagonal square matrix $A$ with elements $a_1,\dots,a_n$ on the diagonal is shortened as $A=\text{diag}([a_1,\dots,a_n])$. We use the upper script as $x^k$ for discrete-time signals, and $x^{k|j}$ represents the estimation of $x^k$ at the time j. $diag(x_1,x_2,...,x_n)$ denotes a diagonal matrix of the elements $x_1,x_2,...,x_n$. The set $G[a,b]$ represents a grid with the bounds $a$ and $b$. 

\section{\uppercase{Problem Formulation}}
\label{sec:Problem Formulation}

In this section, considering a class of systems, we formulate the effect of the \gls{DoS} attack. Accordingly, we define the problem of interest.

Consider the following system
\begin{align} \label{system}
    &\dot x =Ax+\Delta(x)+Bu
\end{align}
where $x\in D\subset {\rm I\!R}^n$ and $u\in U \subset{\rm I\!R}^m$ are respectively the state and control input, and take values on the compact convex sets $D$ and $U$. System matrices are given as $A \in {\rm I\!R}^{n\times n}$, and $B\in {\rm I\!R}^{n\times m}$.  Furthermore, the unmodeled dynamics are given by $\Delta:D\rightarrow {\rm I\!R}^n$.

\begin{assumption} \label{assu_lip}
\leavevmode
    Elements of $\Delta(x)$ are assumed to be a Lipschitz continuous function of the state, i.e. $\exists\eta\in{\rm I\!R}^{n}_+$ such that we have
	\begin{align*}
	{| \Delta_i(x_0)-\Delta_i(y_0)|} \leq \eta_{i} {\lVert x_0-y_0\rVert},
	\end{align*} 
	for any $x_0,y_0 \in D$, where $i \in \{1,\dots,n\}$.
\end{assumption}

While more general classes of dynamical systems do exist, the specific formulation we adopt in this study enables efficient analysis of a wide range of engineering problems, encompassing domains such as robotics, automotive, power systems, and more. This formulation can be also applied to the \gls{ACC} system, which is the focus of our investigation in this paper. In the subsequent section, we will proceed to model the impact of a \gls{DoS} attack on this particular system.

\subsection{Attack Model}
Regarding the fact that the attacks are implemented in the cyber layer, one needs to take into account the interactions in the discrete space. Therefore, let us consider the Euler approximation of (\ref{system}) under actuator attack and with measurement noise as the following
\begin{align}
    \begin{cases}x^{t+1}=A_\tau x^{t}+\Delta(x^{k})\tau+B\tau u^{t}+d^t, & t \notin \alpha\\
    x^{t+1}=A_\tau x^{k}+\Delta(x^{k})\tau+d^t, & t\in 
    \alpha\label{Attack_lin}\\
    \end{cases}\\
    y^t=C(I+\chi^t)x^t+\epsilon^t, \qquad t\in \{0,1,\dots\}\qquad\label{Attack_lin_y}
\end{align}
where $\tau$ is the sampling time and $A_\tau=A\tau+I$ and $B_\tau=B\tau$ are discretized system matrices. Moreover, we assume that we can measure all the states, i.e. $C=I$. Then, the term $d^t$ lumps together the discretization error and some bounded input disturbance. Moreover, we denote the set of time steps during which the \gls{DoS} attack is active using $\alpha\in S_{\alpha}$, where $S_{\alpha}$ represents the set of all possible sequences of attacks. 

Moreover, to ensure a rather practical model of the problem we take also into account the effect of noise. Therefore, the measurements are given by $y^t\in {\rm I\!R}^n$ for steps $t\in \{0,1,\dots\}$ that are affected by noise. The vector $\epsilon^t \in {\rm I\!R}^n$ and the diagonal matrix $\chi=\text{diag}([\chi_1,\dots,\chi_n]) \in {\rm I\!R}^{n\times n}$ are the bounded additive and multiplicative noises affecting the measurements, i.e $|\epsilon^t|\leq \bar \epsilon \in {\rm I\!R}^{n}_+$ and $|\chi^t|\leq \bar \chi$, where $\bar \chi$ is a diagonal matrix with positive values. The distributions of the noises applied are not necessarily uniform. In fact, the formulation can accommodate other distributions, such as truncated Gaussian noise.
\subsection{Objective}
Given the definition of the system under \gls{DoS} attack, we can define the constrained control problem which is solved at each time step $t$ in the rolling horizon fashion for the horizon length of $N$ as
\begin{align} \label{objective}
    {J^*}^t=\underset{u^t(.)}{\min} &\sum_{k=t}^{t+N-1} ({y^k}^TQ{y^k}+{u^k}^TRu^k)+{y^N}^TQ^N{y^N}, \nonumber\\
    &\text{subject to \quad (\ref{Attack_lin}),\text{ }(\ref{Attack_lin_y})}, \nonumber\\
    &\hspace{18mm} x^k \in D, \text{ and } u^k\in U,
\end{align}
for all sequences of attacks $\alpha \in S_\alpha$, noises, and disturbances within their sets of definitions. The objective defined includes stage and terminal costs, respectively.

Addressing the uncertainties inherent in the model, it becomes apparent that a direct approach to solving the optimization problem is not viable. In light of this, the subsequent section explores an alternative technique that can effectively handle the problem by transforming it into the standard form, incorporating parametric and additive uncertainties. This approach capitalizes on the existence of efficient techniques specifically designed to tackle such formulations.

\section{\uppercase{Resilient Framework}}
\label{sec:Resilient Framework}
 In this section, we present the components required for establishing the proposed resilient control in detail. Having the model of the attack defined, we first derive an equivalent uncertain model that facilitates efficient analyses. Second, we present the tightening-based solutions for addressing this problem. Finally, we summarize the entire framework by presenting two algorithms that encapsulate the proposed resilient control approach.

\subsection{Equivalent Uncertain Model}
The following lemma provides regulation for the lumped disturbance present in the model. 
\begin{lemma} \label{lemma_bound}
    The disturbance term $d^t$ is bounded by $\bar d \in {\rm I\!R}^+$.
\end{lemma}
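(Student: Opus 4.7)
The plan is to establish the bound by decomposing $d^t$ into its two stated constituents: the Euler discretization error and the bounded input disturbance. Since the latter is bounded by assumption, the entire task reduces to bounding the discretization error uniformly on the admissible state/input domain.

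First I would write $d^t = e_{\text{disc}}^t + w^t$, where $w^t$ is the (assumed bounded) input disturbance with some bound $\bar w$, and $e_{\text{disc}}^t$ denotes the error between the exact one-step flow of (\ref{system}) and its Euler approximation $A_\tau x^t + \Delta(x^t)\tau + B\tau u^t$ over the interval $[t\tau,(t+1)\tau]$. Since $D$ and $U$ are compact, there exist constants $M_x,M_u$ such that $\lVert x\rVert\le M_x$ on $D$ and $\lVert u\rVert\le M_u$ on $U$. By Assumption \ref{assu_lip}, each component of $\Delta$ is Lipschitz on $D$, hence continuous on a compact set, hence bounded: $\lVert\Delta(x)\rVert\le M_\Delta$ for all $x\in D$. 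Consequently, along any admissible trajectory, $\lVert \dot x\rVert \le \lVert A\rVert M_x + M_\Delta + \lVert B\rVert M_u =: M_{\dot x}$.

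Next I would invoke the standard local truncation error bound for the explicit Euler scheme on the interval of length $\tau$. Writing the exact solution as $x(t\tau+\tau) = x^t + \int_{0}^{\tau} \dot x(t\tau+s)\,ds$ and expanding $\dot x(t\tau+s) = \dot x(t\tau) + \int_0^s \ddot x(t\tau+r)\,dr$, the Euler residual satisfies $\lVert e_{\text{disc}}^t\rVert \le \tfrac{1}{2}\tau^2 \sup \lVert \ddot x\rVert$. Differentiating (\ref{system}) gives $\ddot x = A\dot x + \nabla\Delta(x)\dot x + B\dot u$; the Lipschitz bound on $\Delta$ controls $\nabla \Delta$ in the generalized sense (or alternatively one can bound $\lVert \Delta(x(t\tau+s))-\Delta(x^t)\rVert$ directly by $\eta M_{\dot x} s$ using Assumption \ref{assu_lip}, avoiding any appeal to smoothness of $\Delta$), and the piecewise-constant zero-order hold on $u$ makes the $B\dot u$ contribution vanish on the open interval. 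Either route yields a constant $M_{\ddot x}$, so $\lVert e_{\text{disc}}^t\rVert \le \tfrac{1}{2}\tau^2 M_{\ddot x}$.

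Setting $\bar d := \tfrac{1}{2}\tau^2 M_{\ddot x} + \bar w$ then gives $|d^t|\le \bar d$ componentwise (taking the entrywise bound coming from the norm), uniformly in $t$. The main obstacle is the second derivative step: because $\Delta$ is assumed only Lipschitz and not $C^1$, the cleanest route is to avoid $\ddot x$ altogether and work directly with the Lipschitz estimate, bounding $\lVert \int_0^\tau (\dot x(t\tau+s)-\dot x(t\tau))\,ds\rVert$ by $\int_0^\tau(\lVert A\rVert + \eta)\lVert x(t\tau+s)-x^t\rVert\,ds \le \tfrac{1}{2}\tau^2 (\lVert A\rVert+\eta_{\max}) M_{\dot x}$, which yields the same form of bound while respecting Assumption \ref{assu_lip} exactly as stated.
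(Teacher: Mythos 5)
Your proof is correct and follows essentially the same route as the paper, which only sketches the argument (compactness of $D\times U$ plus Assumption \ref{assu_lip} bounds the local truncation error, and the input disturbance is bounded by assumption). Your version is in fact more careful than the paper's: you correctly note that $\Delta$ is only Lipschitz, not $C^1$, and supply the direct Lipschitz estimate on $\dot x(t\tau+s)-\dot x(t\tau)$ that avoids invoking $\ddot x$.
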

\begin{proof}
    Considering the Assumption \ref{assu_lip} and compact domains, it can be shown that the local truncation error resulting from the discretization remains bounded for all $(x^t,u^t)\in D\times U$. Moreover, according to our assumption, the system may be prone to some bounded input disturbance. Therefore, the lumped disturbance $d^t$ is also bounded by some $\bar d \in {\rm I\!R}^{n}_+$. 
\end{proof}

Assuming a periodic DoS attack, as a well-known class of attacks \cite{cetinkaya2019overview}, we can rewrite the system by averaging both modes of (\ref{Attack_lin}) as
\begin{align} 
    &x^{t+1}=A_\tau x^{t}+\Delta(x^{t})\tau+B_\tau  u^t(1-\omega^t)+d^t,\label{Attack_lin_}
\end{align}
where $\omega^t \in [0,1]$ takes continuous values in this closed interval representing the intensity of the DoS attack.

\begin{assumption} \label{assu_attack_bound}
    The attack signals $\omega^t$ is bounded and the estimated values of the upper bounds are known at each time step, i.e. $\exists \bar \omega\in{\rm I\!R}^+$ such that $|\omega^t|\leq \bar \omega$ for $k \in \{0,\dots, N-1\}$.
\end{assumption}
Assumption \ref{assu_attack_bound} automatically holds for the type of attack considered and the problem formulation where a worst case of $\omega^t$ values is given by $1$. However, in practice, based on the estimations of the attack intensity, smaller values than $1$ may be considered for $\bar \omega$ at each step $t$.

In what follows, we investigate how the measurements deviate from the predictions given by the nominal dynamics
\begin{align*}
  \bar F(x^t,u^t)=A_\tau x^{t}+B_\tau  u^{t}. 
\end{align*}
Therefore, consider
\begin{align} \label{proof_eq1}
    y^{t+1}&-\bar F(x^t,u^t)\nonumber\\
    &=(I+\chi^{t+1})x^{t+1}+\epsilon^{t+1}-A_\tau x^{t}-B_\tau  u^{t}\nonumber\\
    &=(I+\chi^{t+1}) \Big(A_\tau x^{t}+\Delta(x^{t})\tau+B_\tau  u^{t}(1-\omega^t)+d^t\Big)\nonumber\\
    &\quad+\epsilon^{t+1}-A_\tau x^{t}-B_\tau  u^{t}\nonumber\\
    &=\chi^{t+1}A_\tau x^{t}+\big((I+\chi^{t+1})(1-\omega^t)-I\big)B_\tau  u^t\nonumber\\
    &\quad+(I+\chi^{t+1}) \big(\Delta(x^{t})\tau+d^t\big)+\epsilon^{t+1}\nonumber\\
    &=\chi^{t+1}A_\tau x^{t}+(\chi^{t+1}-(I+\chi^{t+1})\omega^t)B_\tau  u^t\nonumber\\
    &\quad +(I+\chi^{t+1})\Delta(x^{t})\tau+(I+\chi^{t+1}) d^t+\epsilon^{t+1},
\end{align}
where we used (\ref{Attack_lin_}) in the derivations.
 Starting with the first two terms, let us define the convex polytopic sets $\Pi_A$ and $\Pi_B$ as below that contain the uncertainty corresponding to $A_\tau$ and $B_\tau$ matrices of the nominal dynamic, respectively, 
\begin{align} \label{polytop}
    &\Pi_A = \text{conv}(\{\chi^{t+1}A_\tau |\chi^{t+1} \in \chi_{v}\}),\nonumber \\
    &\Pi_B = \text{conv}(\{(\chi^{t+1}-(I+\chi^{t+1})\omega^t)B_\tau | \chi^{t+1} \in \chi_{v},\nonumber\\
    &\qquad\qquad\omega^t\in \{0,\bar \omega\}\}),
\end{align}
for all vertices $\chi_{v}$ given by the extreme values of $\chi^{t+1}$. Regarding that $\bar \chi$ is diagonal, $\chi_{v}$ can be easily calculated.

In the subsequent step, the remaining terms are treated as additive uncertainty. It is important to highlight that, in order to obtain specific bounds for each system dynamic specifically, component-wise calculations are employed, rather than considering a norm-based approach. In this regard, the non-scalar bounds defined and the Lipschitz constants in Assumption \ref{assu_lip} facilitate these computations. Hence, we aim for a bound using equation (\ref{proof_eq1}) that yields
\begin{align}
    |y^{t+1}&-(\bar F(x^t,u^t)+\chi^{t+1}A_\tau x^t\nonumber\\
    &+(\chi^{t+1}-(I+\chi^{t+1})\omega^t)B_\tau  u^t)|\nonumber\\
    &=|(I+\chi^{t+1})\Delta(x^t)\tau+(I+\chi^{t+1}) d^t+\epsilon^{t+1}| \nonumber \\
    &\leq |(I+\chi^{t+1})\Delta(x^t)\tau|+|(I+\chi^{t+1}) d^t|+|\epsilon^{t+1}| \nonumber\\
    &\leq |(I+\chi^{t+1})\tau||\Delta(x^t)|+|(I+\chi^{t+1}) d^t|+|\epsilon^{t+1}|\nonumber\\
    &\leq |(I+\chi^{t+1})\tau|\eta \lVert x^t\rVert+|(I+\chi^{t+1}) d^t|+|\epsilon^{t+1}|\nonumber\\
    &\leq |(I+\bar \chi)\tau|\lVert x^t\rVert\eta +(I+\bar\chi) \bar d+\bar \epsilon,
\end{align}

where we used Assumption \ref{assu_lip} to derive the last two steps. This provides a bound for the remaining terms while one can use $\underset{x^t \in D}{\text{max}}(\lVert x^t \rVert)$ to bound $\lVert x^t \rVert$. However, it may not offer a useful bound if $\eta$ is not small. As an alternative, we can set different values for the bounds based on the current value of $x^t$, instead. In this case, considering that we do not measure $x^t$ exactly, we can employ the measurements through (\ref{Attack_lin_y}) to obtain
\begin{align}
    \lVert x^t\rVert&=\lVert(I+\chi^{t+1})^{-1}\rVert\lVert(y^t-\epsilon^{t+1})\rVert \nonumber\\
    &\leq \lVert(I+\chi^{t+1})^{-1}\rVert(\lVert y^t\rVert+\lVert\epsilon^{t+1}\rVert)\nonumber\\
    & \leq\lVert(I-\bar\chi)^{-1}\rVert(\lVert y^t\rVert+\lVert\bar\epsilon\rVert).
\end{align}

We summarize the computations by utilizing a discrete-time linear model that incorporates parametric and additive uncertainty. This model serves as an overapproximation of the continuous-time dynamics (\ref{system}) in the presence of a DoS attack and uncertainty,
\begin{align} \label{uncertain_linear}
    x^{t+1}=(A_\tau +\hat \Delta_A)x^t+ (B_\tau+\hat \Delta_B) u^t + \hat d^t,
\end{align}
where $|\hat d^t| \leq \hat \delta$ with
\begin{align} \label{d_bound}
\hat \delta&=|(I+\bar \chi)\tau|\lVert(I-\bar\chi)^{-1}\rVert(\lVert y^t\rVert+\lVert\bar\epsilon\rVert)\eta \nonumber \\
&\quad+(I+\bar\chi) \bar d+\bar \epsilon,
\end{align}
 $\hat \Delta_A \in \Pi_A $ and $ \hat \Delta_B \in \Pi_B$, with defined $\Pi_A$ and $\Pi_B$ by (\ref{polytop}).

\subsection{Resilience via Constraint Tightening}
In this section, we summarize the constraint tightening technique employed for solving the \gls{RMPC} problem. Accordingly, we deliver the resilient framework proposed using also the results obtained in the previous section.


 Regarding the model in (\ref{uncertain_linear}), although utilizing fixed bounds can be effective for addressing slight model uncertainty and noise effects, it may not adequately capture the impact of \gls{DoS} attacks, which is considered the major source of uncertainty in the model. To address this, we propose a more adaptable approach that can accommodate various strengths of attacks while maintaining satisfactory system performance. Moreover, as previously suggested, the Lipschitz values may be large leading to large additive bounds in (\ref{d_bound}) that can be also addressed by a similar approach.
 
In order to overcome these limitations, let us define different quantities of bounds for $\omega^t$ and $ \lVert y^t \rVert$ as $[\omega]_q$ and $ [\lVert y \rVert]_q$ that are taken from a set of grid points, i.e. $([\omega]_j,[\lVert y \rVert]_i) \in G[0,\bar \omega]\times G[0,\text{sup}(\lVert y \rVert)]$ for $j=1,\dots,N_\omega$ and $i=1,\dots,N_d$, where $N_\omega$ and $N_d$ are the numbers of grid points for each dimension. Then, by online observations, one needs to ensure that the conditions $\omega^t\leq [\omega]_q $ and $\lVert y^t \rVert \leq [\lVert y \rVert]_q$ hold by choosing a suitable $q$ from the set of indices $\{1,\dots, N_\omega\times N_d\}$.
 
 By employing a similar scheme as \cite{bujarbaruah2021simple}, we consider an adaptive prediction horizon where at each time step, we solve the problem for different horizon lengths $N_t\in \{1,\dots,N\}$, and proceed with the one with the least cost. However, there is a key distinction in our approach as we take into account a collection of uncertain models, which are defined by different bounds corresponding to varying levels of attack intensity. Accordingly, we apply one of the following two approaches in handling the uncertainties depending on the value of $N_t$. 
 
 \begin{itemize}

 \item Case {$N_t=1$}: Accordingly, the robust MPC problem is exactly solved for a horizon length of one. For this purpose, assuming that there exists a feedback gain $K_q$ such that $(A_\tau +\hat \Delta_A)+ (B_\tau+\hat \Delta_{Bq})K_q$ is stable for all  $\hat \Delta_A \in \Pi_A $ and $ \hat \Delta_{Bq} \in \Pi_{Bq}$, we can construct the terminal sets $X^N_{q}$ as the maximal robust positive invariant set for $x^{t+1}=\Big((A_\tau +\hat \Delta_A)+ (B_\tau+\hat \Delta_{Bq})K_q\Big)x^t+\hat d^t$, with $q\in\{1,\dots, N_\omega\times N_d\}$. Therefore, in addition to the state and control input constraints, we need to satisfy the condition 
 \begin{align}
     \Big((A_\tau +\hat \Delta_A)+ (B_\tau+\hat \Delta_{Bq})K_q\Big)x^t+\hat d^t \in X^N_{q} 
 \end{align}
 in the optimization problem, where $X^N_q$ is a convex set defining the terminal set for the model given by the index $q$. It should be noted that, $\Delta_{Bq}$ and $|\hat d^t|\leq \hat \delta_q$ are characterized by the quantities $[\omega]_q$ and $ [\lVert y \rVert]_q$ for given index $q$, according to relations (\ref{polytop}) and (\ref{d_bound}).

\item Case {$N_t>1$}: Given the computational intensity of the method employed in the previous case for multi-step predictions, an alternative approach is taken. Bounds are utilized to over-approximate system uncertainty, rather than precise calculations by using the technique found in \cite{goulart2006optimization}. This allows for the treatment of all uncertainties as a net-additive component, utilizing a more constructive technique. The adoption of this approach aims to mitigate the computational burden while still effectively accounting for system uncertainties. 

 \end{itemize}
The presented resilient framework can be effectively implemented in two parts. In the first part, the model and uncertainty bounds are processed to characterize the constraints. These computations are conducted offline in advance which facilitates the preparation of constraints. By performing these computations beforehand, the constraints can be readily available for subsequent utilization. In Algorithm \ref{alg_offline}, which presents the offline procedure, we grid the space $G[0,\bar \omega]\times G[0,\text{sup}(\lVert y \rVert)]$ to obtain different values $[\omega]_q$ and $ [\lVert y \rVert]_q$. Then, we use (\ref{polytop}) and (\ref{d_bound}) to calculate the corresponding bounds for all $q$.

\begin{algorithm}[h] 
 \caption{Offline computations of the bounds and state constraints.}
 \vspace{0mm}
 \label{alg_offline}
 \KwData{System matrices, $\bar \chi$, $\bar \epsilon$, domain and control constraints $\gets D$ and $U $}
\vspace{-2mm}
 \KwResult{\begin{tabular}{@{\hspace*{0.1em}}l@{}}
        \\$\hat \Delta_A$, $\hat \Delta_{Bq}$, {\normalfont and} $\hat \delta_q$. \\
        Terminal sets $X_q^N$  $\forall  ([\omega]_q$ ,$ [\lVert y \rVert]_q)$\\  and weights $Q^N$.\\
      \end{tabular} }
 $\%$ \textit{Number of grid points}\\
 $N_\omega$ and $N_d$ $ \gets$ positive integers \;
 $\%$ \textit{Grid points}\\
 $\omega_{List}\gets \text{linspace}(0,\bar \omega,N_\omega)$\;
 $Y_{List}\gets \text{linspace}(0,\text{sup}(\lVert y \rVert),N_d)$\;
  \textit{$\%$Using relation (\ref{polytop})}:\\
 Calculate $\hat \Delta_A$ \;
 \For{ $([\omega]_q$ ,$ [\lVert y \rVert]_q)$ in $\omega_{List}\times Y_{List}$}{
 \textit{$\%$Using relation (\ref{polytop})}:\\
  Calculate  $\hat \Delta_{Bq}$\;
  \textit{$\%$ Using relation (\ref{d_bound})}:\\
   Calculate $\hat \delta_q$ \;
   \textit{$\%$ Employing \cite{bujarbaruah2021simple}}:\\
   Calculate the terminal sets $X_q^N$\;
   Calculate $Q^N$ \;
   }
\end{algorithm}

The second part of the implementation involves the utilization of the pre-determined constraints in an online optimization-based control approach. During the online phase, these prepared constraints are incorporated into an optimization framework to generate real-time control actions by exploiting \cite{bujarbaruah2021simple}. For this purpose, we obtain an estimation of the intensity of an ongoing attack using an anomaly detection method and choose applicable $[\omega]_q$ and $ [\lVert y \rVert]_q$. By integrating their corresponding constraints into the optimization process, the control approach ensures that the system operates within desired limits while effectively addressing uncertainties. This procedure is summarized in Algorithm \ref{alg_online}.

 \begin{remark}
    The offline computations enable the efficient characterization of constraints, resulting in computational time savings during online implementation. Furthermore, using a hybrid technique employing the two cases according to $N_t$ facilitates a responsive optimization process, thereby potentially enabling realtime resilient control actions in real-world applications. 
 \end{remark}
 
\begin{algorithm}[h] 
 \vspace{-2mm}
 \caption{Online computation of resilient control value.}
  \label{alg_online}
 \KwData{$A_\tau$, $B_\tau$, $\hat \Delta_A$, $\hat \Delta_{Bq}$, {\normalfont and} $\hat \delta_q$ $\forall ([\omega]_q$ ,$ [\lVert y \rVert]_q)$.\\
        $X_q^N$  $\forall ([\omega]_q$ ,$ [\lVert y \rVert]_q)$ and weights $Q^N$.
      }
 \KwResult{$u^t$ }
 \%initialization\\
 $N\gets$ positive integer\;
 \For{$t= 0, 1, \dots$}{
   Measure $y^t$\;
   Detect attack\;
   Estimate $\hat\omega^t$\ based on attack detected\;
   Choose $([\omega]_q$ ,$ [\lVert y \rVert]_q)$ based on $\hat \omega^t$ and $y^t$\;
   \For{$N_t= 1, 2, \dots,N$}{
   Solve RMPC for $\hat \Delta_A,\hat \Delta_{Bq},\hat \delta_q$, $X_q^N$,$Q^N$ \;
   }
   Apply $u^t$: arg $\underset{N_t}{min} $ ${J^*}$\; }
\end{algorithm}
 \vspace{-5mm}
 \section{\uppercase{Anomaly Detection}}
\label{sec:Anomaly Detection}
To implement \gls{RMPC}, we need to observe the intensity of the launched attack. Therefore, the Kalman filter \cite{7947172,6859155} technique is utilized to detect the launched attack. Accordingly, the system states are estimated, and the resulting residuals are employed to detect the attack. 
 
The residual at $t$th step is defined as
\begin{equation}
    r = y^{t}-y^{t|t-1}
\end{equation}
s.t.
\begin{equation}
    r = y^t - C\hat{x}^{t|t-1}.
\end{equation}

Then, according to \cite{5718158,7040293}, we can determine if the system is under attack. 
\begin{align}\label{eq:detect attack}
    \begin{cases}t\notin 
    \alpha, &\text{if}\ r^T\mathcal{P}^{-1} r\leq T,\\
    t\in 
    \alpha, & \text{if}\ r^T\mathcal{P} ^{-1}r> T,\\
    \end{cases}
\end{align}
where $T\in {\rm I\!R}_{+}$ is the threshold of the corresponding residue, to be tuned by the user. Moreover, $\mathcal{P}$ represents the covariance matrix of the residue $r$, and $T$ is the threshold. According to (\ref{eq:detect attack}), we can determine whether the system is under attack.

In the next section, the detection results are presented in more detail. For improved results, robust Kalman filters \cite{10056247} can be exploited alternatively. 

\section{\uppercase{Simulation Results}}
\label{sec:Simulation Results}
To demonstrate the efficacy of the proposed approach under \gls{DoS} attacks, we validate Algorithm \ref{alg_offline} and \ref{alg_online} on the \gls{ACC} problem. Moreover, we present the details of the detection procedure and discuss some comparison results. All the simulations are conducted in Matlab on the Windows operating system with the hardware configuration of AMD Ryzen 9, 16-Core, 3.40 GHz, and 64GB of RAM.

As shown in Fig. \ref{fig:acc_example}, \gls{ACC} system consists of two vehicles, one of which is the ego vehicle and the other is the lead vehicle. The control objective defined for the ego vehicle is to maintain the distance from the lead vehicle while satisfying the state and control constraints. 

To describe the model, we employ the state variables $x=[\delta d, \delta v, \dot v_h]^T$ defined as the followings:
\begin{itemize}
    \item $\delta d$ is defined as the distance error which is the difference between the actual distance $d$ and the desired distance $d_r$ from the lead vehicle, i.e $\delta d= d-d_r$.
    \item $\delta v$  denotes the velocity difference between the lead vehicle $v_p$ and the ego vehicle $v_h$.
    \item $\dot{v}_h$ represents the acceleration of the ego vehicle.
\end{itemize}
According to \cite{takahama2018model,al2021model}, the longitudinal dynamics of the ego vehicle is given as

\begin{equation}
    \begin{aligned}\label{dynamic_cnt}
        &\dot{v}_{h} =A_fv_h+B_fu , \\
        &a_{f} =C_f v_h ,
    \end{aligned}
\end{equation}
where $a_f$ is the traction force of the vehicle converted to acceleration
\begin{equation}
    \begin{aligned}
        &A_f=-\frac{1}{T_{eng}},&B_f=-\frac{K_{eng}}{T_{eng}},\text{ and}\hspace{2mm} C_f=1.
    \end{aligned}
\end{equation}

Furthermore, $T_{eng}$ is the constant of acceleration of the engine, and $K_{eng}$ is the gain of the engine.

In addition, the reference distance is defined with respect to the velocity of the ego vehicle using
\begin{align} \label{ref_distance}
    d_r=T_{hw}v_h+d_0
\end{align}
where $T_{hw}$ is the constant time headway, and $d_0$ is the safety clearance when the lead vehicle comes to a full stop. However, for simplicity, it is assumed that the lead vehicle has some positive velocity, hence, $d_0=0$ can be used safely. 

According to (\ref{dynamic_cnt}), (\ref{ref_distance}), and the defined state vector $x$, we can obtain the discrete-time model as  (\ref{Attack_lin}) with $A_\tau =I+\begin{bmatrix}0&1&-T_{hw}\\ 0&0&-1\\ 0&0&A_f\end{bmatrix}\tau$, $B_\tau =\begin{bmatrix}0\\ 0\\ B_f\end{bmatrix}\tau$ and $C=\begin{bmatrix}1&0&0\\ 0&1&0\\ 0&0&1\end{bmatrix}$.

For more details, we refer the readers to \cite{takahama2018model,al2021model}. Also, similar to these works, we set the values of the parameters $T_{hw}$, $T_{eng}$, and $K_{eng}$, as shown in Table \ref{tab:params for acc}.

\begin{table}[h]
\caption{Parameters of \gls{ACC} model.}\label{tab:params for acc} \centering
\begin{tabular}{|c|c|c|}
  \hline
  $T_{hw}$ & $T_{eng}$ & $K_{eng}$ \\
  \hline
  1.6 & 0.46 sec  & 0.732\\
  \hline
\end{tabular}
\end{table}
\subsection{Detection}
In the simulated attack scenario, we generate a periodic attack characterized by varying active lengths for each period. Fig. \ref{fig:detection_result} illustrates the attack signal profile that initiates at  $t=2$ seconds. 

To detect this attack, we employ the Kalman filter approach, with noise covariances $Q_f=diag(0.01,0.01,0.01)$, $R=diag(0.01,0.01,0.01)$, the sampling time $t_{sample}=0.01$, and the initial covariance matrix $P_{init}=diag(0.1,0.1,0.1)$. The residuals obtained from the detection process are also depicted in the same figure, showcasing the impact of the attack events. By appropriately setting threshold values, we are able to successfully detect the attack, as demonstrated in Fig. \ref{fig:detection_result}. It is important to note that while some false positive and negative cases may occur, the Kalman filter detector is effective in detecting \gls{DoS} attacks in the majority of instances.

The bottom graph in Fig. \ref{fig:detection_result} shows the results for the estimation of attack intensity, i.e. $\hat\omega^t$, together with the exact signal $\omega^t$, which is obtained based on the true attack signal. In fact, we use a backward-moving average of the attack signal to generate such an estimation and it illustrates how effectively one can reconstruct a signal representing the attack intensity. By comparing the estimated attack strength $\hat \omega^t$ with the exact signal, we observe a close correspondence between the two. This demonstrates the ability of the estimation process to capture and reconstruct the features of the attack signal that can be used to establish the resilient controller accordingly.

  \begin{figure}[ht]
    \centering
    \includegraphics[clip, trim=0.5cm 0.1cm 0.5cm 0.5cm, width=0.95\linewidth]{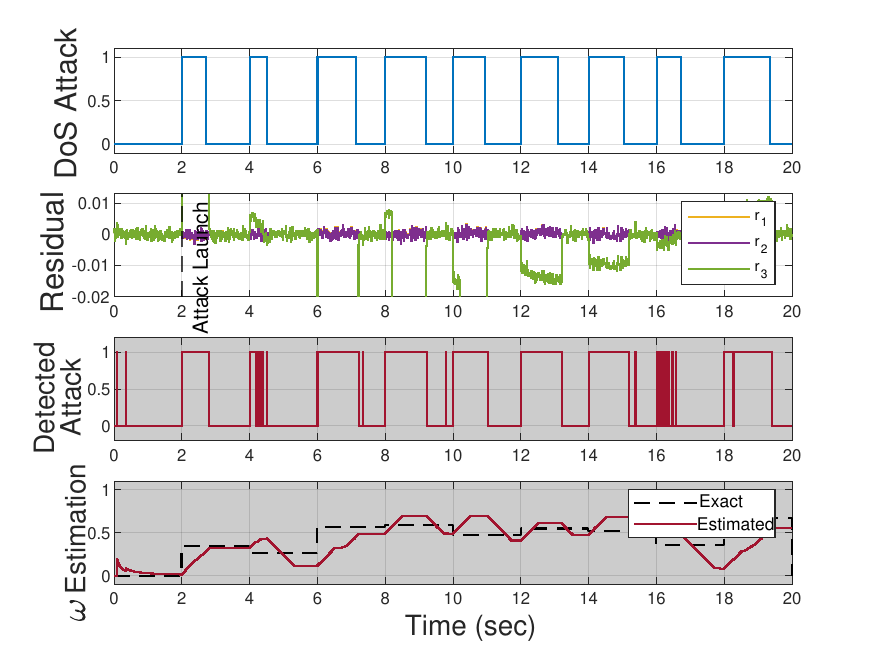}
    \caption{The result of anomaly detection together with the estimation of the strength of attack $\omega^t$.}
    \label{fig:detection_result}
\end{figure}

\subsection{Control}
For validation, we apply the resilient control approach proposed on the \gls{ACC} system specified. For the control computations, we consider a discretized system $\tau=0.2$ sec. To characterize the objective function (\ref{objective}), we set $Q=diag(10,10,10)$, $R=2$, and $N=5$ for all simulation if not mentioned otherwise. Moreover, the state and control are constrained within intervals $[-100,-100,-100]^T<x^t<[10,100,100]^T$ and $-20<u^t<20$, respectively. 

For the preparation of constraints, we employ the proposed Algorithm \ref{alg_offline} in Matlab, with $N_\omega=10$, $N_d=3$,$|\chi|\leq diag([1 ,1,1])\times10^{-2}$,$|\epsilon|\leq[0.1,0.1,0.1]^T$, and $\eta=[1,1,1]\times 10^{-2}$.

Having the uncertainty bounds and the terminal sets calculated in our deposit, Algorithm $\ref{alg_online}$ can be executed within the control loop to handle the attack profile discussed in the previous section. In Fig. \ref{fig:RMPC_simulation}, we present the evolutions of the \gls{ACC} system under attack, controlled by the proposed resilient controller. The control plot demonstrates that the control signal becomes zero when a \gls{DoS} attack is initiated, as it cannot be transmitted to the vehicle. Therefore, the control strategy must efficiently compensate for this absence of control during attack periods.

 \begin{figure}[h]
    \centering
    \includegraphics[clip, trim=0.5cm 0.5cm 0.5cm 0.5cm, width=1\linewidth]{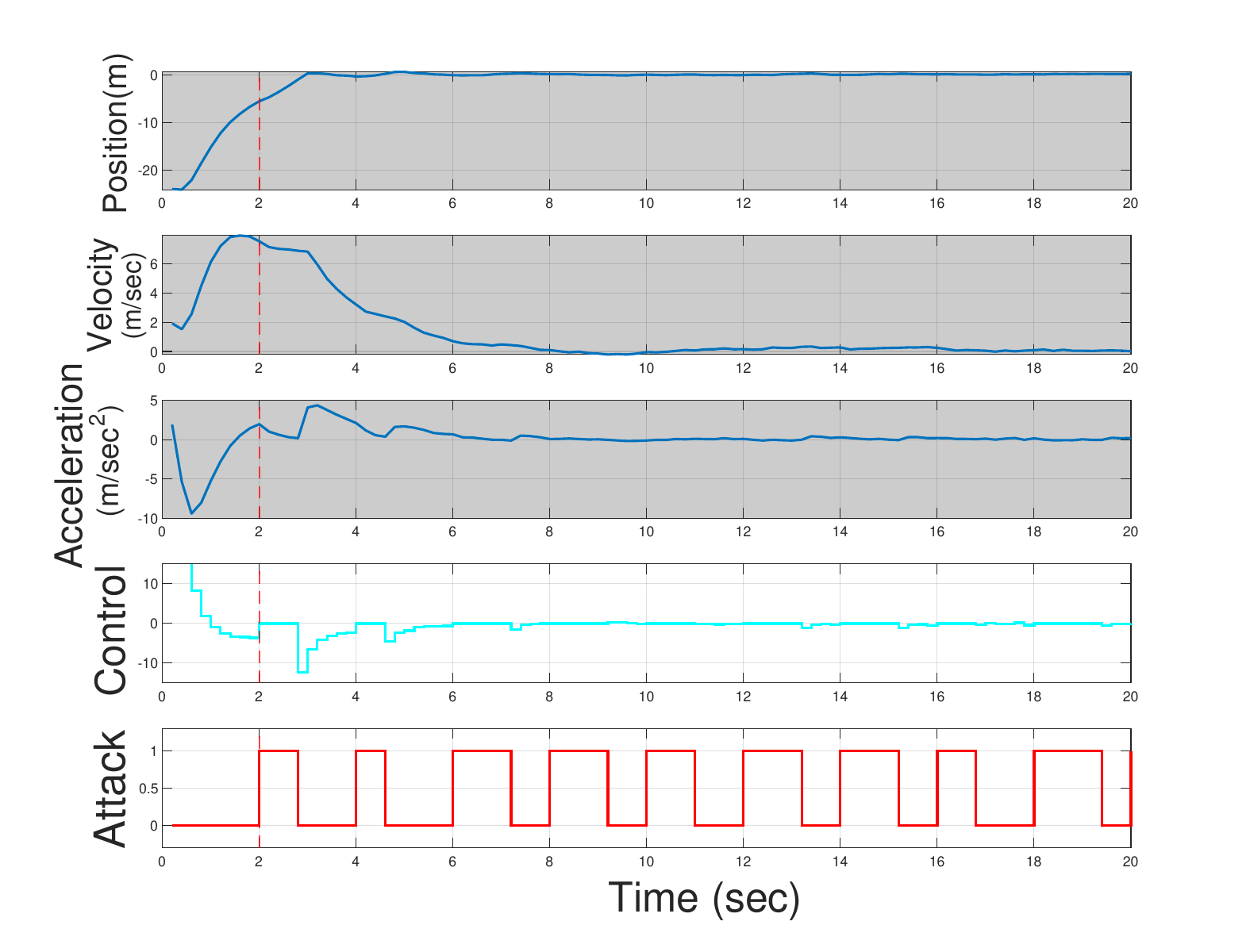}
    \caption{Details of the resilient control responses against DoS attack.}
    \label{fig:RMPC_simulation}
\end{figure}

Regarding the system states, including relative position, velocity, and acceleration, the primary objective of the control is to regulate the system to the origin, represented by $x=[0,0,0]^T$. Remarkably, despite the occurrence of the \gls{DoS} attack on the system, all the states effectively converge to zero starting from a random initial condition with the utilization of the proposed resilient control strategy. These results clearly indicate the effectiveness and resilience of the proposed control approach in managing the impact of \gls{DoS} attacks on the \gls{ACC} system. 

\subsection{Comparison Results}
For a better demonstration of the effectiveness of the proposed resilient control, we also apply the standard \gls{MPC} in the same attack scenario and \gls{ACC} system configuration. Therefore, we employ CasADi optimization library \cite{Andersson2018} in Matlab.

 \begin{figure}[h!]
    \centering
    \includegraphics[clip, trim=0.5cm 0.5cm 0.5cm 0.5cm, width=1\linewidth]{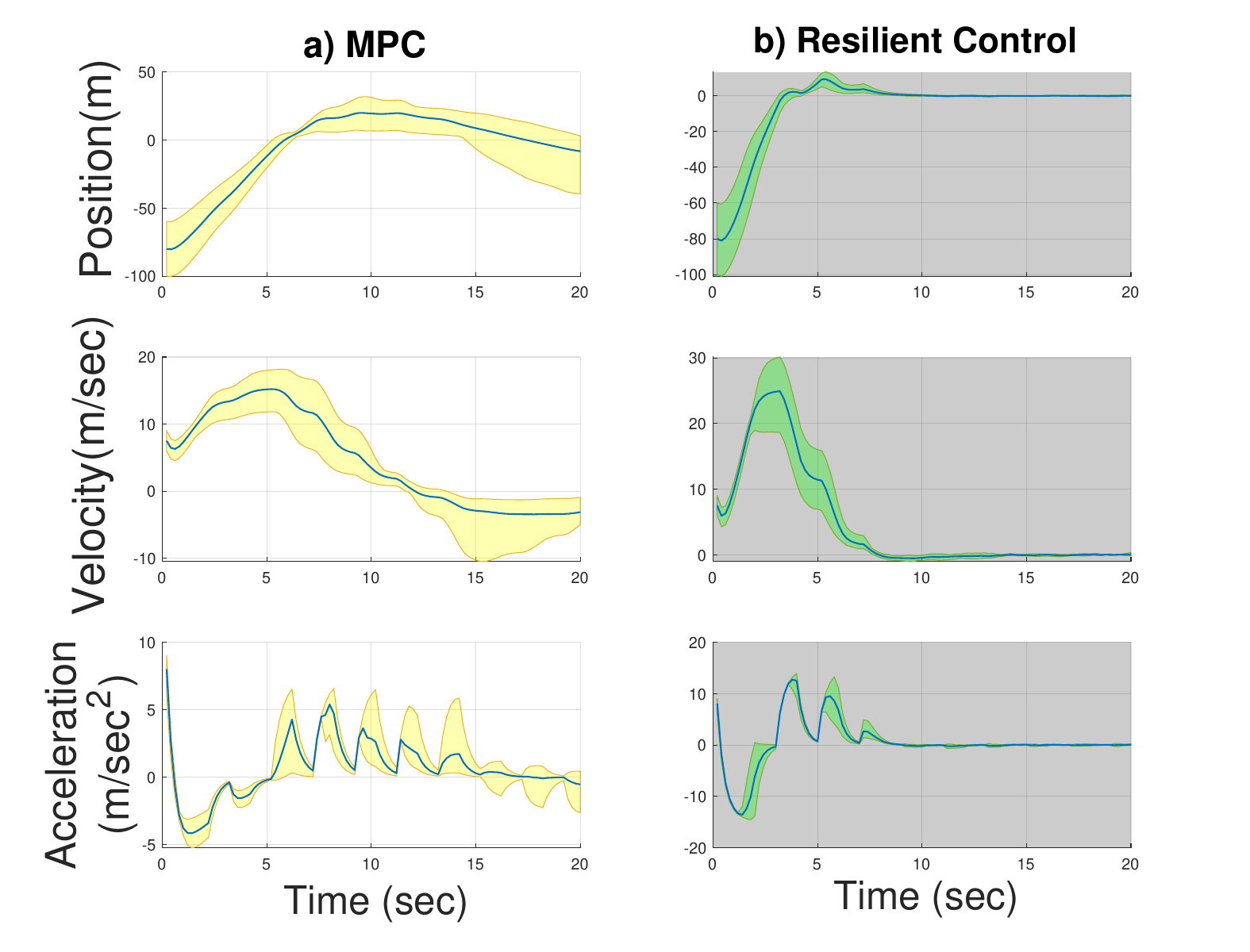}
    \caption{Comparison between \gls{MPC} and the proposed resilient control for a set of initial conditions.}
    \label{fig:state trajectory of MPC and RMPC}
\end{figure}

In Fig. \ref{fig:state trajectory of MPC and RMPC}, we illustrate the comparison between the proposed resilient control and \gls{MPC} in terms of the convergence of system trajectories. For this result, we consider a set of initial conditions, then we show the mean and upper/lower bound of the trajectories using the line curves and the shaded areas, respectively. It is evident that the proposed method results in faster convergence while \gls{MPC} fails to effectively regulate the system trajectories to zero. The performance can be also quantitatively compared by keeping track of the cost function over time, i.e. the summation in (\ref{objective}), for both techniques. These values of cost can be shown as two signals as in Fig. \ref{fig:cost comparation}. In Table \ref{tab:example1}, the numerical values
of costs are reported by averaging for all initial conditions. Accordingly, in the simulated scenario, the mean cost value shows about 38$\%$ improvement for the proposed approach in comparison to standard \gls{MPC}.

\begin{table}[h]
\caption{Comparison of mean cost values obtained.}\label{tab:example1} \centering
\begin{tabular}{|c|c|c|}
  \hline
  MPC & Resilient Control & Improvement\\
  \hline
  $9.2163\times 10^{5}$ & $5.7300 \times 10^5$  &38$\%$\\
  \hline
\end{tabular}
\end{table}
\vspace{-5mm}
\paragraph{Computational time:} 
Finally, as a comparison of the computational complexity of the proposed approach, we illustrate the runtime results for both presented and \gls{MPC} techniques in Fig. \ref{fig:runtime}. We run the proposed technique in two different configurations with $N=1$ and $N=5$. By comparing these results, the basic \gls{MPC} runs faster as expected since it solves a less complicated problem. However, the runtime results for both settings of the proposed technique are comparable to \gls{MPC}, making it a potential candidate for replacing non-resilient controllers in real-world implementations. This is mostly because the main part of the computations is done offline using Algorithm \ref{alg_offline}.

 \begin{figure}[h!]
    \centering
    \includegraphics[clip, trim=0.5cm 0.1cm 0.5cm 0.2cm, width=0.8\linewidth]{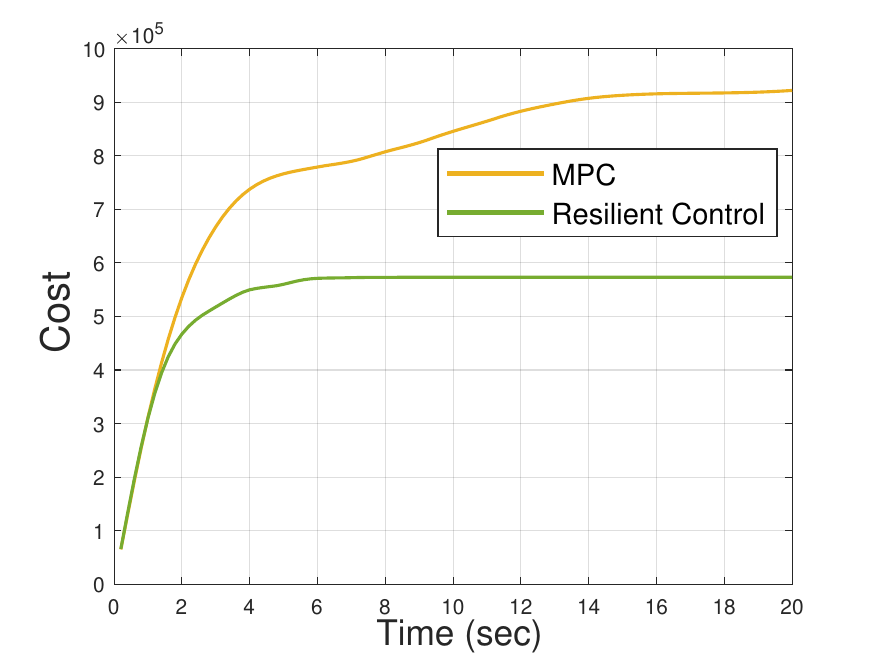}
    \caption{In this graph, by keeping track of the cost function for the proposed resilient control and \gls{MPC}, we compare the performance, where the proposed method clearly outperforms by resulting in a lower control cost.}
    \label{fig:cost comparation}
\end{figure}
 \begin{figure}[h]
    \centering
    \includegraphics[clip, trim=0.5cm 0.2cm 0.5cm 0.5cm, width=0.8\linewidth]{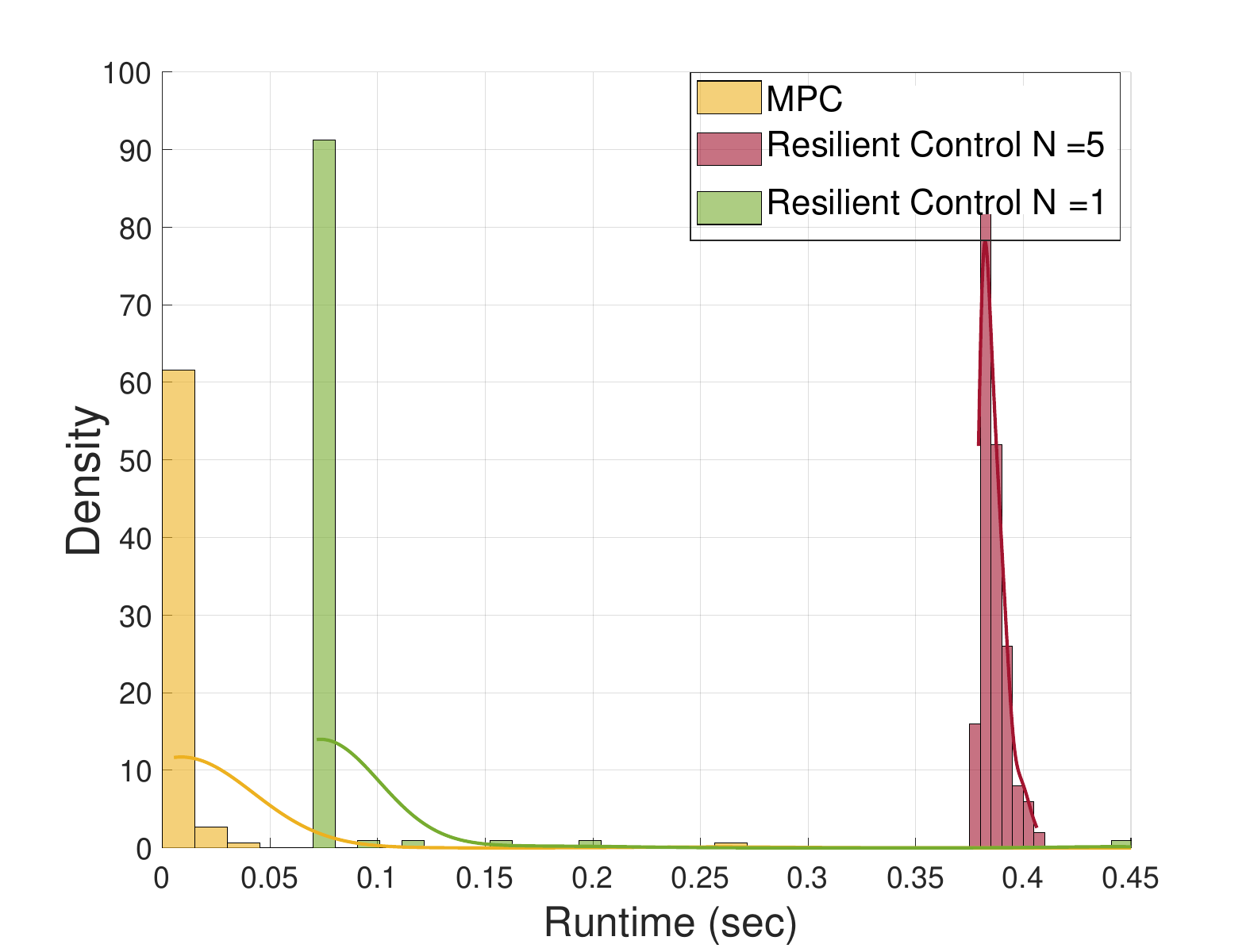}
    \caption{We compared the runtime results for both techniques: the basic MPC and two different configurations of the proposed method with $N=1$ and $N=5$.}
    \label{fig:runtime}
\end{figure}

\subsection{Conclusion}
This study proposed a novel framework for designing a resilient \gls{MPC} system to handle uncertain linear systems under periodic \gls{DoS} attacks. The DoS attack was modeled as an uncertain parameter-varying system with additive disturbance, and the Kalman filter was used for anomaly detection. An optimization-based resilient algorithm was developed using a robust constraint-tightening \gls{MPC} approach. We implemented the approach to the \gls{ACC} problem, showcasing its effectiveness in mitigating the impact of periodic attacks and ensuring system stability. Overall, the study provided a solution to enhance the resilience of control systems in the presence of \gls{DoS} attacks. Incorporating robust attack detection methods and extending the framework to encompass various types of attacks can be potentially promising for future research.

\bibliographystyle{apalike}
{\small
\bibliography{example}}

\begin{thebibliography}{}

\bibitem[Al-Gabalawy et~al., 2021]{al2021model}
Al-Gabalawy, M., Hosny, N.~S., and Aborisha, A.-h.~S. (2021).
\newblock Model predictive control for a basic adaptive cruise control.
\newblock {\em International Journal of Dynamics and Control}, 9(3):1132--1143.

\bibitem[Andersson et~al., 2018]{Andersson2018}
Andersson, J. A.~E., Gillis, J., Horn, G., Rawlings, J.~B., and Diehl, M.
  (2018).
\newblock {CasADi} -- {A} software framework for nonlinear optimization and
  optimal control.
\newblock {\em Mathematical Programming Computation}.

\bibitem[Aubouin-Pairault et~al., 2022]{aubouin2022resilient}
Aubouin-Pairault, B., Perodou, A., Combastel, C., and Zolghadri, A. (2022).
\newblock Resilient tube-based mpc for cyber-physical systems under dos
  attacks.
\newblock {\em IFAC-PapersOnLine}, 55(6):278--284.

\bibitem[Bai and Gupta, 2014]{6859155}
Bai, C.-Z. and Gupta, V. (2014).
\newblock On kalman filtering in the presence of a compromised sensor:
  Fundamental performance bounds.
\newblock In {\em 2014 American Control Conference}, pages 3029--3034.

\bibitem[Bai et~al., 2017]{7947172}
Bai, C.-Z., Gupta, V., and Pasqualetti, F. (2017).
\newblock On kalman filtering with compromised sensors: Attack stealthiness and
  performance bounds.
\newblock {\em IEEE Transactions on Automatic Control}, 62(12):6641--6648.

\bibitem[Bemporad and Morari, 2007]{bemporad2007robust}
Bemporad, A. and Morari, M. (2007).
\newblock Robust model predictive control: A survey.
\newblock In {\em Robustness in identification and control}, pages 207--226.
  Springer.

\bibitem[Biron et~al., 2018]{biron2018real}
Biron, Z.~A., Dey, S., and Pisu, P. (2018).
\newblock Real-time detection and estimation of denial of service attack in
  connected vehicle systems.
\newblock {\em IEEE Transactions on Intelligent Transportation Systems},
  19(12):3893--3902.

\bibitem[Bujarbaruah et~al., 2021]{bujarbaruah2021simple}
Bujarbaruah, M., Rosolia, U., St{\"u}rz, Y.~R., and Borrelli, F. (2021).
\newblock A simple robust mpc for linear systems with parametric and additive
  uncertainty.
\newblock In {\em 2021 American Control Conference (ACC)}, pages 2108--2113.
  IEEE.

\bibitem[Cetinkaya et~al., 2019]{cetinkaya2019overview}
Cetinkaya, A., Ishii, H., and Hayakawa, T. (2019).
\newblock An overview on denial-of-service attacks in control systems: Attack
  models and security analyses.
\newblock {\em Entropy}, 21(2):210.

\bibitem[Elsisi et~al., 2023]{10056247}
Elsisi, M., Altius, M., Su, S.-F., and Su, C.-L. (2023).
\newblock Robust kalman filter for position estimation of automated guided
  vehicles under cyberattacks.
\newblock {\em IEEE Transactions on Instrumentation and Measurement}, 72:1--12.

\bibitem[Goulart et~al., 2006]{goulart2006optimization}
Goulart, P.~J., Kerrigan, E.~C., and Maciejowski, J.~M. (2006).
\newblock Optimization over state feedback policies for robust control with
  constraints.
\newblock {\em Automatica}, 42(4):523--533.

\bibitem[Gupta et~al., 2016]{gupta2016dynamic}
Gupta, A., Langbort, C., and Ba{\c{s}}ar, T. (2016).
\newblock Dynamic games with asymmetric information and resource constrained
  players with applications to security of cyberphysical systems.
\newblock {\em IEEE Transactions on Control of Network Systems}, 4(1):71--81.

\bibitem[Huang et~al., 2020]{huang2020dynamic}
Huang, Y., Chen, J., Huang, L., and Zhu, Q. (2020).
\newblock Dynamic games for secure and resilient control system design.
\newblock {\em National Science Review}, 7(7):1125--1141.

\bibitem[Ju et~al., 2022]{ju2022survey}
Ju, Z., Zhang, H., Li, X., Chen, X., Han, J., and Yang, M. (2022).
\newblock A survey on attack detection and resilience for connected and
  automated vehicles: From vehicle dynamics and control perspective.
\newblock {\em IEEE Transactions on Intelligent Vehicles}.

\bibitem[K{\"o}hler et~al., 2018]{kohler2018novel}
K{\"o}hler, J., M{\"u}ller, M.~A., and Allg{\"o}wer, F. (2018).
\newblock A novel constraint tightening approach for nonlinear robust model
  predictive control.
\newblock In {\em 2018 Annual American Control Conference (ACC)}, pages
  728--734. IEEE.

\bibitem[Langson et~al., 2004]{langson2004robust}
Langson, W., Chryssochoos, I., Rakovi{\'c}, S., and Mayne, D.~Q. (2004).
\newblock Robust model predictive control using tubes.
\newblock {\em Automatica}, 40(1):125--133.

\bibitem[Mayne et~al., 2005]{mayne2005robust}
Mayne, D.~Q., Seron, M.~M., and Rakovi{\'c}, S. (2005).
\newblock Robust model predictive control of constrained linear systems with
  bounded disturbances.
\newblock {\em Automatica}, 41(2):219--224.

\bibitem[Miao et~al., 2014]{7040293}
Miao, F., Zhu, Q., Pajic, M., and Pappas, G.~J. (2014).
\newblock Coding sensor outputs for injection attacks detection.
\newblock In {\em 53rd IEEE Conference on Decision and Control}, pages
  5776--5781.

\bibitem[Mo et~al., 2010]{5718158}
Mo, Y., Garone, E., Casavola, A., and Sinopoli, B. (2010).
\newblock False data injection attacks against state estimation in wireless
  sensor networks.
\newblock In {\em 49th IEEE Conference on Decision and Control (CDC)}, pages
  5967--5972.

\bibitem[Raimondo et~al., 2009]{raimondo2009min}
Raimondo, D.~M., Limon, D., Lazar, M., Magni, L., and ndez Camacho, E.~F.
  (2009).
\newblock Min-max model predictive control of nonlinear systems: A unifying
  overview on stability.
\newblock {\em European Journal of Control}, 15(1):5--21.

\bibitem[Sakhdari and Azad, 2018]{sakhdari2018adaptive}
Sakhdari, B. and Azad, N.~L. (2018).
\newblock Adaptive tube-based nonlinear mpc for economic autonomous cruise
  control of plug-in hybrid electric vehicles.
\newblock {\em IEEE Transactions on Vehicular Technology}, 67(12):11390--11401.

\bibitem[Sandberg et~al., 2022]{sandberg2022secure}
Sandberg, H., Gupta, V., and Johansson, K.~H. (2022).
\newblock Secure networked control systems.
\newblock {\em Annual Review of Control, Robotics, and Autonomous Systems},
  5:445--464.

\bibitem[Takahama and Akasaka, 2018]{takahama2018model}
Takahama, T. and Akasaka, D. (2018).
\newblock Model predictive control approach to design practical adaptive cruise
  control for traffic jam.
\newblock {\em International journal of automotive engineering}, 9(3):99--104.

\bibitem[Wu et~al., 2022]{wu2022recent}
Wu, J., Peng, C., Yang, H., and Wang, Y.-L. (2022).
\newblock Recent advances in event-triggered security control of networked
  systems: a survey.
\newblock {\em International Journal of Systems Science}, 53(12):2624--2643.

\bibitem[Xiao et~al., 2020]{xiao2020resilient}
Xiao, S., Ge, X., Han, Q.-L., Cao, Z., Zhang, Y., and Wang, H. (2020).
\newblock Resilient distributed event-triggered control of vehicle platooning
  under dos attacks.
\newblock {\em IFAC-PapersOnLine}, 53(2):1807--1812.

\end{thebibliography}

\end{document}